\documentclass[runningheads]{llncs}

\usepackage[T1]{fontenc}
\usepackage[utf8]{inputenc}
\usepackage{booktabs}
\usepackage{amsfonts}
\usepackage{amsmath}
\usepackage{graphicx}
\usepackage{subcaption}
\usepackage{multirow}
\usepackage{xcolor}
\usepackage{hyperref}
\usepackage{url}
\usepackage[letterpaper,margin=1in]{geometry}
\let\oldthebibliography\thebibliography
\renewcommand{\thebibliography}[1]{%
  \oldthebibliography{#1}%
  \setlength{\itemsep}{0pt}\setlength{\parsep}{0pt}}
\usepackage{enumitem}
\setlist{itemsep=2pt,parsep=0pt,topsep=4pt,partopsep=0pt}
\graphicspath{{figures/}}

\begin{document}

\title{Discovering Dual-Origin Slow Wind from Solar Orbiter with
Self-Supervised Contrastive Learning}

\titlerunning{Discovering Dual-Origin Slow Wind from Solar Orbiter}

\author{Henry Han\thanks{Corresponding author.} \and Jorge Yero Salazar}
\authorrunning{H. Han and J. Yero Salazar}

\institute{Department of Computer Science, Baylor University,\\
One Bear Place, Waco, TX 76798, USA\\
\email{henry\_han@baylor.edu}}

\maketitle

\begin{abstract}
Whether the slow solar wind originates from one coronal source or two distinct
channels remains a central open question in heliophysics. Because these
populations arrive at nearly the same bulk speed and differ mainly in
heavy-ion composition, they must be separated without labels. We present
Solar-CDC, a self-supervised contrastive deep clustering (CDC) framework that
maps plasma observables to a latent space via a Transformer encoder, optimizes
a triplet margin loss, and iteratively updates pseudo-labels via $k$-means.
Theoretically, we prove a hard limit on neighborhood-preserving embeddings
such as t-SNE and UMAP. If mixing is measured by the fraction of a point's
nearest neighbors belonging to the other population, preserving the neighbor
graph leaves this fraction unchanged; preserving all but a fraction
$\varepsilon$ of the links shifts it by at most $\varepsilon$. Neither bound
depends on the embedding dimension: such methods cannot create separation the
measurements lack. A margin objective, however, rewrites the graph and drives
the fraction to zero. Empirically, on $30{,}602$ Solar Orbiter observations,
thirty combinations of dimensionality reduction and clustering peak at a
silhouette of $0.454$, whereas Solar-CDC reaches $0.869$. Escaping this limit
is not enough: TriMap also optimizes triplets and reaches $0.824$, yet its
clusters score below chance against the published composition taxonomy.
Solar-CDC instead recovers clusters with mean charge-state ratios of $0.080$,
$0.160$, and $0.400$, placing the intermediate population squarely inside the
window associated with coronal-hole boundaries. Even when the defining
charge-state ratio is withheld from the inputs entirely, the model still
recovers the taxonomy defined on it. Importantly, a learning loss recovers
physical populations only when driven by dynamically updated physically-aware
clusters rather than distances.

\keywords{Solar Wind \and Self-Supervised Learning \and Contrastive Deep
Clustering \and Cluster Validation \and Solar Orbiter \and Plasma
Classification}
\end{abstract}

% =============================================================================
\section{Introduction}
The solar wind, a constant, supersonic stream of magnetized plasma from the Sun's corona, is the primary way the Sun interacts with the heliosphere. When this plasma reaches Earth, it triggers geomagnetic storms, substorms, and auroral activity. These events can damage satellites, disrupt power grids, and degrade radio communications  \cite{solar_and_magnetic_field,solarwind_geomagnetic,solar_space_weather}. A central unresolved question in heliophysics is whether the ``slow'' solar wind originates from a single coronal source or two distinct channels.

\textit{The Dual-Origin Slow Wind Problem.} The origin of the fast solar wind
is settled; the source of the slow wind is not. What separates the candidates
is composition rather than speed. Fast wind ($v_p > 600$~km/s) originates from
coronal holes, regions of open magnetic field, and carries low proton density
and a low oxygen charge-state ratio ($\mathrm{O}^{7+}/\mathrm{O}^{6+} < 0.145$)
that reflects the cool, rapidly diverging environment in which it forms
\cite{solar_fast_and_slow,lepri2013solar}. Slow wind ($v_p < 500$~km/s) is
denser and more variable, and its candidate sources include helmet-streamer
tips, coronal-hole boundaries and active-region peripheries. A recently
identified \textit{Alfv\'enic slow wind} travels at slow-wind speed while
carrying the composition of the fast wind, which points to an origin near
coronal-hole boundaries \cite{damicis2021alfvenic}. Alfv\'enicity here is the
degree to which velocity and magnetic-field fluctuations correlate, the
signature of outward-propagating Alfv\'en waves.

\begin{figure}[!htbp]
\centering
\includegraphics[width=\linewidth]{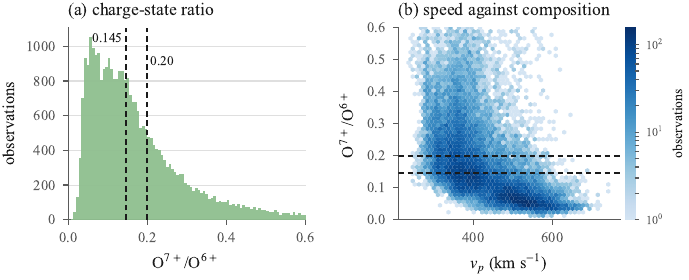}
\caption{Why a threshold on one variable is not enough. (a) The
$\mathrm{O}^{7+}/\mathrm{O}^{6+}$ distribution over the 30{,}602 Solar Orbiter
observations, with the coronal-hole ($0.145$) and streamer-belt ($0.20$)
reference values marked. (b) Bulk speed against composition: the two slow-wind
channels are stacked in composition at essentially the same speed, so a
velocity cut cannot separate them.}
\label{fig:problem}
\end{figure}

Turning that picture into a task on the data gives three parts. The first is to separate fast coronal-hole wind from all slow wind. The second is to distinguish streamer-belt slow wind, of closed-field origin, from boundary-type slow wind, of open-field origin. The third is to do both from in-situ plasma measurements alone. Solving (ii) would
support the \emph{dual-origin hypothesis} and constrain coronal magnetic field
models. Part (ii) is the hard one. The two populations overlap in bulk speed, the flow speed of the plasma as a whole rather than the thermal motion of its particles. The discriminating signal therefore lies in the joint heavy-ion composition, a multivariate fingerprint that no single-threshold scheme can exploit
(Fig.~\ref{fig:problem}).

\label{sec:intro-physics}%
\textit{Physical Basis for a Three-Population Model.} Three independent physical
mechanisms justify a three-population model ($k = 3$). First, coronal magnetic
topology partitions the solar atmosphere into open-field coronal holes and
closed-field streamer belts, establishing a baseline of $k \ge 2$. Second, the internal structure of coronal holes introduces a vital subdivision. Rapidly expanding interior flux tubes produce fast, cool wind, whereas slowly expanding boundary flux tubes produce the Alfv\'enic slow wind \cite{damicis2021alfvenic}. Its composition diverges from closed-field streamer-belt plasma, which raises the bound to $k = 3$. Third, charge-state freeze-in ratios
($\mathrm{O}^{7+}/\mathrm{O}^{6+}$) fixed at $1$--$2\,R_\odot$ encode the source electron temperature $T_e$. They map these topological regions to distinct, non-overlapping ranges: $<0.145$ (coronal-hole interior, $T_e < 1.4$~MK)
\cite{xu2015new}, $0.145$--$0.20$ (coronal-hole boundary,
$T_e \approx 1.5$--$1.6$~MK) \cite{damicis2021alfvenic}, and $>0.20$ (streamer
belt, $T_e > 1.6$~MK) \cite{lepri2013solar}.

\textit{Limitations of Existing Methods.} Traditional classification applies expert thresholds to one or two scalar
parameters \cite{xu2015new}. Such schemes are reproducible but cannot
resolve the two slow-wind populations, whose defining differences lie in the
joint $(\mathrm{O}^{7+}/\mathrm{O}^{6+},\,\mathrm{C}^{6+}/\mathrm{C}^{4+},\,
\mathrm{Fe/O})$ space rather than in any single variable
\cite{xu2015new,roberts2020determination}.
Unsupervised methods such as
$k$-means \cite{solar_kmeans} and dimensionality-reduction pipelines
\cite{carpenter2023dimension} have been applied to solar wind data, but the
projection is computed first and the clustering runs on whatever it produces.
PCA keeps the directions of largest variance; t-SNE and UMAP keep each point
near the neighbors it had in the input. Neither asks which observations ought to
end up together, and with no labels there is nothing to correct the choice
against afterwards. This matters most for unlabeled data such as ours, where no
ground truth exists to reveal that a projection has hidden the boundary. Sec.~\ref{sec:theory} shows the consequence: a projection
of that kind cannot sharpen a boundary the input metric does not already
contain, whatever its target dimension.

\textit{Our Solution.} We propose Solar-CDC, a self-supervised contrastive deep clustering (CDC) method. Using only unlabeled in-situ measurements, it learns a representation for plasma observations. In this new space, the distance between two observations reflects the similarity of their coronal source rather than their bulk speed.
It separates the populations more
sharply than any baseline we tested (Sec.~\ref{sec:results}). On Solar Orbiter it splits the slow wind in two. The halves differ by $4$~km\,s$^{-1}$ in speed but by a factor of $2.5$ in oxygen charge state, and one of them falls inside the window associated with coronal-hole boundaries. That is the split the dual-origin hypothesis predicts, found without labels.
  
Solar-CDC does not rely on a fixed dimension-reduction projection. Instead, it clusters observations by alternating between two interdependent tasks: updating the data representation based on the current grouping, and re-forming the groups within that new representation. A Transformer encoder maps each observation into a latent space to apply the contrastive objective. During each training round, the model pulls each observation closer to its current group members. Simultaneously, it pushes the observation away from members of other groups. This process adjusts the encoder's weights, causing the latent representation to evolve without relying on external true-source labels. As the latent layout changes, the grouping is recomputed. The contrastive pushing and pulling then restarts using these updated groups. This self-supervised cycle repeats until the layout stabilizes and re-clustering returns identical groups. 

However, this iterative process lacks an external anchor and is highly sensitive to initial conditions. Starting from random groups would lock the model into an arbitrary pattern. To prevent this, we derive the initial groups directly from the physical plasma measurements (called 'warm-up'). This initialization ensures that the final clusters are far more likely to correspond to true solar wind populations.

\textit{Why This Addresses the Dual-Origin Question.} The two slow-wind channels overlap in bulk speed and density, separating only in composition (Fig.~\ref{fig:problem}). While simple velocity thresholds cannot detect this boundary, our method learns directly from raw observables to correctly partition the composition space. Initializing the network with these observables anchors the clusters to the physical charge-state structure rather than random geometric noise, and the alternating loop sharpens the boundaries entirely without labels. Sec.~\ref{sec:results} validates both the necessity of this physical initialization and the accuracy of the resulting coronal source taxonomy.

\textit{Contributions.} This study makes the following major contributions:

\begin{enumerate}
\item \textit{Theoretical separability bound (Sec.~\ref{sec:theory}):} We derive a bound determining which data representations can separate these populations based on the objective rather than the target dimension. Embeddings preserving the $\kappa$-nearest-neighbor graph leave the cut fraction unchanged (or bounded by $\varepsilon$), while triplet margin embeddings drive it to zero. This holds across all dimensions.

\item \textit{Empirical validation:} We validate the Solar-CDC bound across 39 baselines. As predicted, neighborhood-preserving methods (e.g., t-SNE, UMAP) fail to separate populations, yielding cut fractions near the observable space baseline ($0.064$--$0.084$). Triplet-based TriMap escapes this band (silhouette $0.824$), and Solar-CDC drives the cut fraction to near-zero ($0.001$).

\item \textit{Geometry versus physics:} We demonstrate that geometric separation from dimension reduction is necessary but insufficient for physical relevance. While TriMap clusters tightly, its partition is physically meaningless against the established taxonomy (adjusted Rand index $-0.021$). Solar-CDC uniquely achieves both tight separation (silhouette $0.869$) and true physical alignment (NMI $0.422$).

\item \textit{Recovery of solar wind populations:} Solar-CDC identifies clusters with mean charge-state ratios of $0.080$, $0.160$ (aligning with coronal-hole boundaries), and $0.400$. It separates two slow-wind populations differing by merely $4$~km\,s$^{-1}$ in speed but a factor of $2.5$ in charge state. Even with $\mathrm{O}^{7+}/\mathrm{O}^{6+}$ entirely withheld, Solar-CDC recovers the taxonomy (NMI $0.321$, against $0.000$ for a random labeling).

\item \textit{Architectural insights:} Ablation reveals the latent dimension strictly governs performance ($p = 1.8\times10^{-4}$), while a single attention head suffices ($p = 0.19$). Furthermore, training solely on the seven physical observables yields significantly better clustering ($p = 1.2\times10^{-3}$) than appending metadata like observation times or archived labels.
\end{enumerate}

\section{Related Work: Why Standard Tools Fall Short}

The two candidate slow-wind populations travel at the same speed and arrive
mixed together. This section reviews how the problem has been approached and
why each family of standard tools stops short of separating them.

\textit{The Limits of Traditional Classification.} Solar wind has historically
been separated by applying thresholds to bulk properties such as proton
specific entropy or Alfv\'en speed \cite{xu2015new}. When unsupervised
clustering was introduced, $k$-means on bulk parameters separated fast wind
from slow wind \cite{solar_kmeans}, and adding magnetic variance recovered
classes consistent with Alfv\'enic and non-Alfv\'enic slow wind
\cite{roberts2020determination}. The difficulty is that the two candidate
slow-wind populations overlap almost perfectly in bulk speed and density.
Separating them by a velocity threshold is like sorting red apples from green
ones with a scale: the instrument is accurate, but the quantity it measures
does not carry the distinction. Our prior work explored stacked dimensionality
reduction \cite{carpenter2023dimension} and composition-aware classification
for Solar Orbiter \cite{zhao2024classification}. The latter supplies the measurements used here, and it is the dependence of such schemes on predefined class labels that we set out to remove.

\textit{The Dimensionality Reduction Trap.} If the composition data is
high-dimensional, an obvious move is to project it into two dimensions with a
manifold learning tool such as t-SNE \cite{tsne}, UMAP \cite{umap}, or PHATE
\cite{phate}. Each is designed to preserve the local neighborhood structure of
the input space. That design is the trap: if two slow-wind populations are
tangled as mutual neighbors in the raw observable space, which they are, a
neighborhood-preserving projection keeps them tangled in the output. Such a
method does not push distinct physical populations apart; it takes a
lower-dimensional picture of the arrangement it was given.
Sec.~\ref{sec:theory} makes this precise and bounds how far the tangle can
be reduced. TriMap \cite{trimap} escapes the bound by optimizing triplet
relationships instead of neighborhoods, and Sec.~\ref{sec:external} shows what
that buys: a geometrically tight partition that does not correspond to the
composition taxonomy. These methods also produce a 2-D view, and
Sec.~\ref{sec:protocol} explains why cluster-quality scores computed inside
such a view cannot be compared with scores computed in a learned latent space.

\textit{The Missing Piece in Contrastive Learning.} Unlabeled data is
addressed in modern machine learning by self-supervised representation learning
\cite{ssl_survey,bachman2019representation} and deep clustering
\cite{deepclustering}. The usual engine is a triplet loss \cite{triplet_loss}
that pulls similar items together and pushes dissimilar ones apart, combined
with a Transformer encoder \cite{vaswani2017attention} to capture interactions
among features. Much of this literature was developed for computer vision,
where a positive pair is produced by augmenting an image: a crop, a flip or a
blur yields a second view of the same object. That device does not transfer
here. There is no crop or blur of a seven-dimensional composition vector that
leaves the physics intact, since altering a carbon charge-state ratio alters
what the measurement means. The augmentation-based branch of self-supervised
learning is therefore unavailable, and a positive must be a second real
observation rather than a modified copy of the first.

\textit{Enter Solar-CDC.} Separating the two slow-wind populations therefore calls for three things at once. The method must actively push overlapping populations apart, unlike a neighborhood-preserving projection. It must use no human-assigned labels, unlike a threshold scheme. And it must need no augmentation, unlike standard contrastive learning. Solar-CDC combines the three. It initializes its groups from the
physical observables and feeds them into an alternating triplet-loss loop, so
the supervisory signal is produced by the data itself.

\section{Solar Orbiter Data and Feature Space}
\label{sec:data}
We use in-situ measurements from the Solar Orbiter spacecraft \cite{zhao2024classification} spanning January 2022 through April 2023. The Solar Wind Analyser Proton-Alpha Sensor (SWA-PAS) provides bulk plasma properties, and the Heavy Ion Sensor (SWA-HIS) provides ion composition. The model input consists of seven physical observables:

\begin{enumerate}
\item \textit{Proton bulk speed} $v_p$ (km/s): Measures how fast the plasma moves past the spacecraft. It effectively separates fast wind from slow wind, but cannot distinguish the two slow-wind sources.
\item \textit{Proton number density} $N_p$ (cm$^{-3}$): Measures the number of protons per unit volume. The slow wind is consistently denser than the fast wind.
\item \textit{Oxygen charge-state ratio} $\mathrm{O}^{7+}/\mathrm{O}^{6+}$: The proportion of oxygen ions that have lost seven electrons instead of six. This ratio ``freezes in'' as plasma leaves the corona and does not change afterward, preserving a permanent record of the source region's temperature. Values below $0.145$ indicate a cool, fast coronal-hole origin \cite{lepri2013solar}.
\item \textit{Carbon charge-state ratio} $\mathrm{C}^{6+}/\mathrm{C}^{4+}$: An independent temperature reading. Because carbon freezes in over a different temperature range than oxygen, it provides complementary evidence of the source environment.
\item \textit{Carbon charge-state ratio} $\mathrm{C}^{6+}/\mathrm{C}^{5+}$: A third temperature reading sensitive to a narrower band, helping to resolve coronal sources that the other ratios place at nearly the same temperature.
\item \textit{Iron-to-oxygen ratio} Fe/O: Indicates the relative abundance of iron. Because iron ionizes more easily low in the atmosphere, an elevated ratio marks plasma that lingered on closed magnetic field lines before being released into the solar wind.
\item \textit{Mean oxygen charge state} $\langle Q_O \rangle$: Summarizes the entire oxygen distribution rather than just a single pair of ionization stages, providing the steadiest composition metric.
\end{enumerate}

\section{Methods}
This section introduces Solar-CDC and how it solves the dual-origin slow-wind challenge.

\subsection{Problem Formulation}
The boundary between the two slow-wind populations is invisible in bulk speed. The true coronal source of a plasma parcel cannot be measured by a spacecraft at all, so any operational boundary must be learned without ground-truth labels. Our goal is therefore a transformation that pulls apart
populations that are physically distinct but kinematically overlapping.

Formally, let $\mathcal{X} = \{x_n\}_{n=1}^{N}$ be the unlabeled set of in-situ
solar wind  observations, where $x_n \in \mathbb{R}^{p}$ and $p$ is the number of physical
observables (features) recorded per observation ($p = 7$ for the Solar Orbiter data of
Sec.~\ref{sec:data}). Rather than cluster in $\mathbb{R}^p$ directly, we learn a
parameterized encoder $f_\theta: \mathbb{R}^p \to \mathbb{R}^m$ into a latent
space where the Euclidean distance between two points reflects the similarity of
their coronal origin rather than their similarity in bulk speed. The latent
dimension $m$ is a free parameter and is not required to be smaller than $p$;
Sec.~\ref{sec:theory} shows that what decides separability is the objective, not
the dimension.

Simultaneously we seek a partition $\mathcal{C} = \{C_1, \ldots, C_k\}$ of the
latent codes into $k$ clusters, such that each $C_i$ recovers a physically
distinct plasma population: fast wind, boundary-type slow wind, and
streamer-belt slow wind. While the application here is the solar wind, nothing
in the method or in the bounds of Sec.~\ref{sec:theory} depends on the value of
$p$ or on the inputs being plasma observables.

\begin{figure}[!htbp]
\centering
\includegraphics[width=0.84\linewidth]{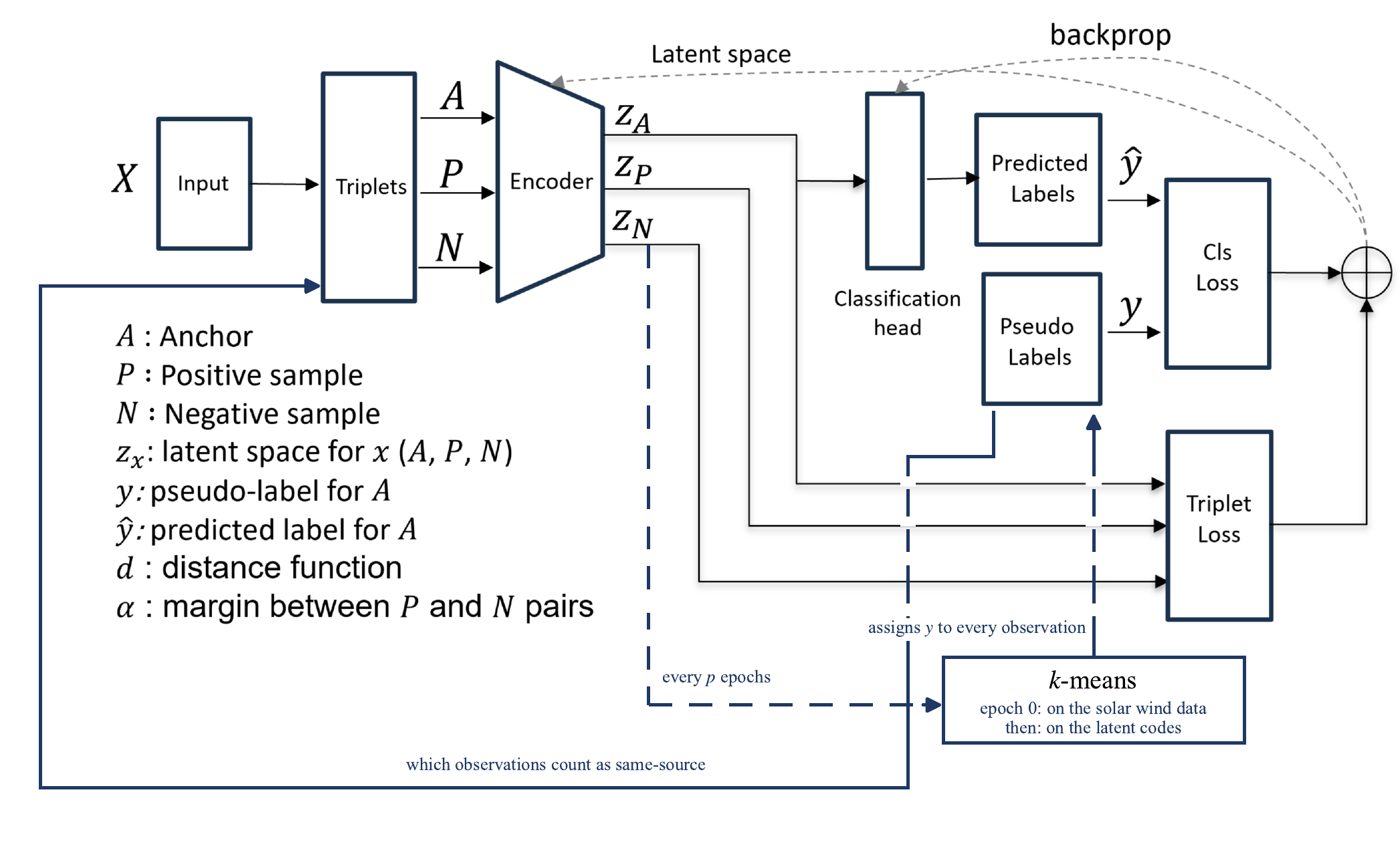}
\caption{Solar-CDC. Triplets drawn from the current pseudo-labels pass through
the shared encoder; the classification head is trained on those pseudo-labels
while the triplet term acts on the latent codes, and both gradients reach the
encoder. The lower half of the diagram is the part that makes the method
self-supervised: the pseudo-labels are not given, they are produced by
$k$-means, on the observables at epoch~0 and on the latent codes every $p$
epochs thereafter, and they decide in turn which observations are drawn as
positives and negatives.}
\label{fig:arch}
\end{figure}

\subsection{Solar-CDC}
\textit{Solar-CDC $=$ self-supervised contrastive learning $+$ deep clustering.}
Solar-CDC combines self-supervised contrastive learning \cite{ssl_survey} with
deep clustering \cite{deepclustering} to produce a physically aware partition of
the solar wind, directly addressing the dual-origin question.

It begins with a warmup phase: $k$-means applied directly to the standardized
observables supplies the initial pseudo-labels (Sec.~\ref{sec:warmup}). A
Transformer encoder then maps the observations into a latent space where the
contrastive objective is applied. For each observation (the anchor), the model
randomly draws a positive sample from the same pseudo-label group and a negative
sample from a different group. A triplet loss \cite{triplet_loss} then pulls the
anchor toward the positive and pushes it away from the negative by a fixed
margin. That margin is the engine of the separation. Rather than pulling points toward a cluster center, it imposes a relative distance constraint. Satisfying that constraint forces the neighbor links crossing the boundary to be broken,
the very links a neighborhood-preserving projection is obliged to keep
(Sec.~\ref{sec:theory}). A positive is therefore not an augmented view of the
anchor, as it would be in vision, but another real observation that the current
grouping treats as same-source.

Crucially, every $p$ epochs the pseudo-labels are refreshed by re-running
$k$-means on the updated latent codes. This creates a self-correcting feedback
loop: the grouping that defines the triplets is continuously refined by the very
representation it shapes. The cycle leaves a final latent space in which
observations sharing a coronal source cluster together.

This is how Solar-CDC addresses the core difficulty of the dual-origin problem.
By letting the contrastive loop work in the joint composition space, it separates two slow-wind streams that bulk speed cannot tell apart, placing the boundary where the charge-state structure puts it. It does so without manual velocity thresholds and without ground-truth labels.

\textit{Architecture and Triplet Loss.} As shown in Fig.~\ref{fig:arch}, each of the seven observables is treated as a sequence token. The Transformer encoder outputs a latent code $z_n \in \mathbb{R}^m$, and a linear classification head $g_W$ predicts its current pseudo-label $y_n$. For an anchor $x_n^A$, a positive $x_n^P$, and a negative $x_n^N$, the overall objective combines a cross-entropy term with the triplet margin:

\begin{equation}
\mathcal{L}(\theta, W) = \frac{1}{N}\sum_{n=1}^{N}
\Big[\underbrace{\ell_c\big(g_W(z_n^A),\, y_n\big)}_{\text{pseudo-label fit}}
+ \underbrace{\max\!\big(0,\, d(z_n^A, z_n^P) - d(z_n^A, z_n^N) + \alpha\big)}
_{\text{triplet margin}}\Big].
\label{eq:loss}
\end{equation}

Because the $k$-means targets are provisional (Fig.~\ref{fig:steps}), a fraction of them is inherently wrong at any point. The optimization thus operates in the regime of learning from mislabeled data in high dimensions \cite{han2024mislabeled}. The triplet term survives this because it acts as a hinge: once a triple satisfies the margin $\alpha$, it contributes no gradient. This limits how far incorrect pseudo-labels can distort the latent representation before the next $k$-means reassignment corrects them.

Consequently, two partitions coexist during training: the $k$-means assignments (which supply the provisional targets) and the head's $\arg\max$ predictions. The head's output serves as the final trained partition upon which all clustering metrics are computed. 

\textit{Implementation Details.} We train for 200 epochs using a batch size of 20{,}000, the AdamW optimizer at $10^{-3}$, and a margin $\alpha = 1$. The encoder uses one Transformer layer of width 32 with dropout $0.1$. The pseudo-label refinement occurs every $p = 4$ epochs, initialized from a random member of each previous cluster.

\begin{figure}[!htbp]
\centering
\begin{subfigure}[t]{0.49\linewidth}
\centering
\includegraphics[width=\linewidth]{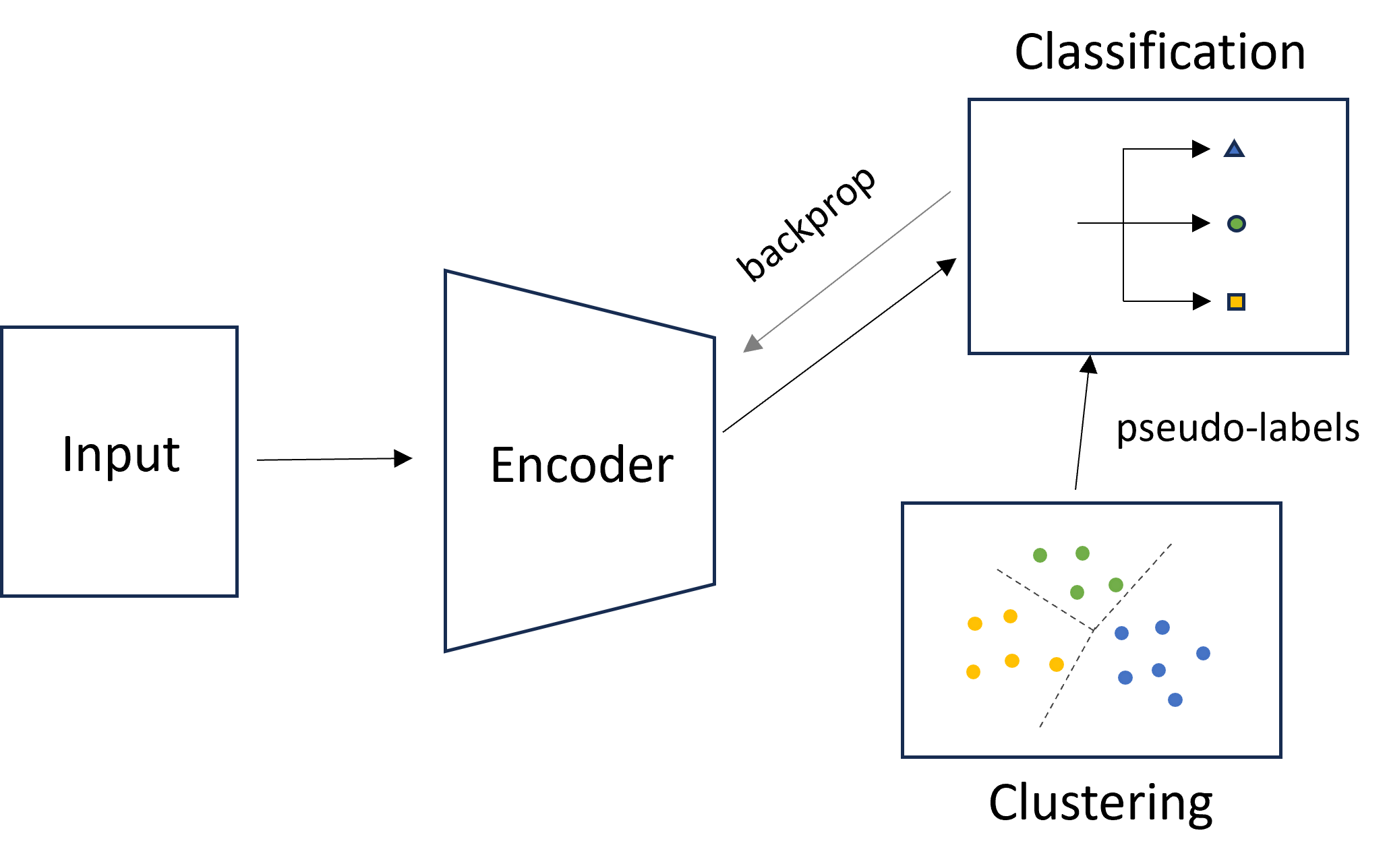}
\caption{Regular step: train on the current pseudo-labels.}
\label{fig:regular_step}
\end{subfigure}\hfill
\begin{subfigure}[t]{0.49\linewidth}
\centering
\includegraphics[width=\linewidth]{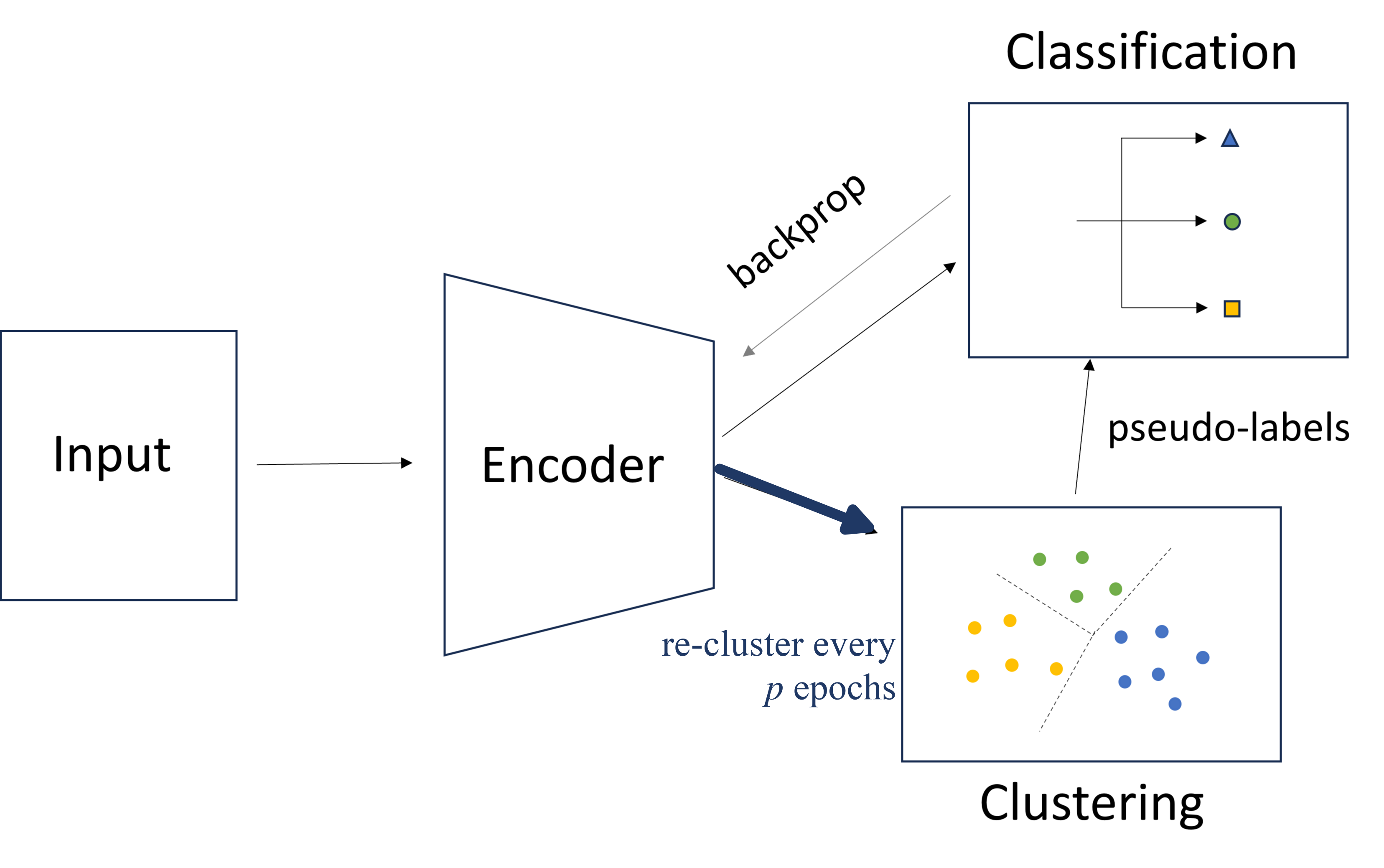}
\caption{Reassignment step: re-cluster the latent codes every $p$ epochs.}
\label{fig:reassign_step}
\end{subfigure}
\caption{The two alternating phases of training. The regular step refines the
encoder given fixed pseudo-labels; the reassignment step updates the
pseudo-labels given the improved encoder.}
\label{fig:steps}
\end{figure}

\begin{figure}[!htbp]
\centering
\includegraphics[width=0.50\linewidth]{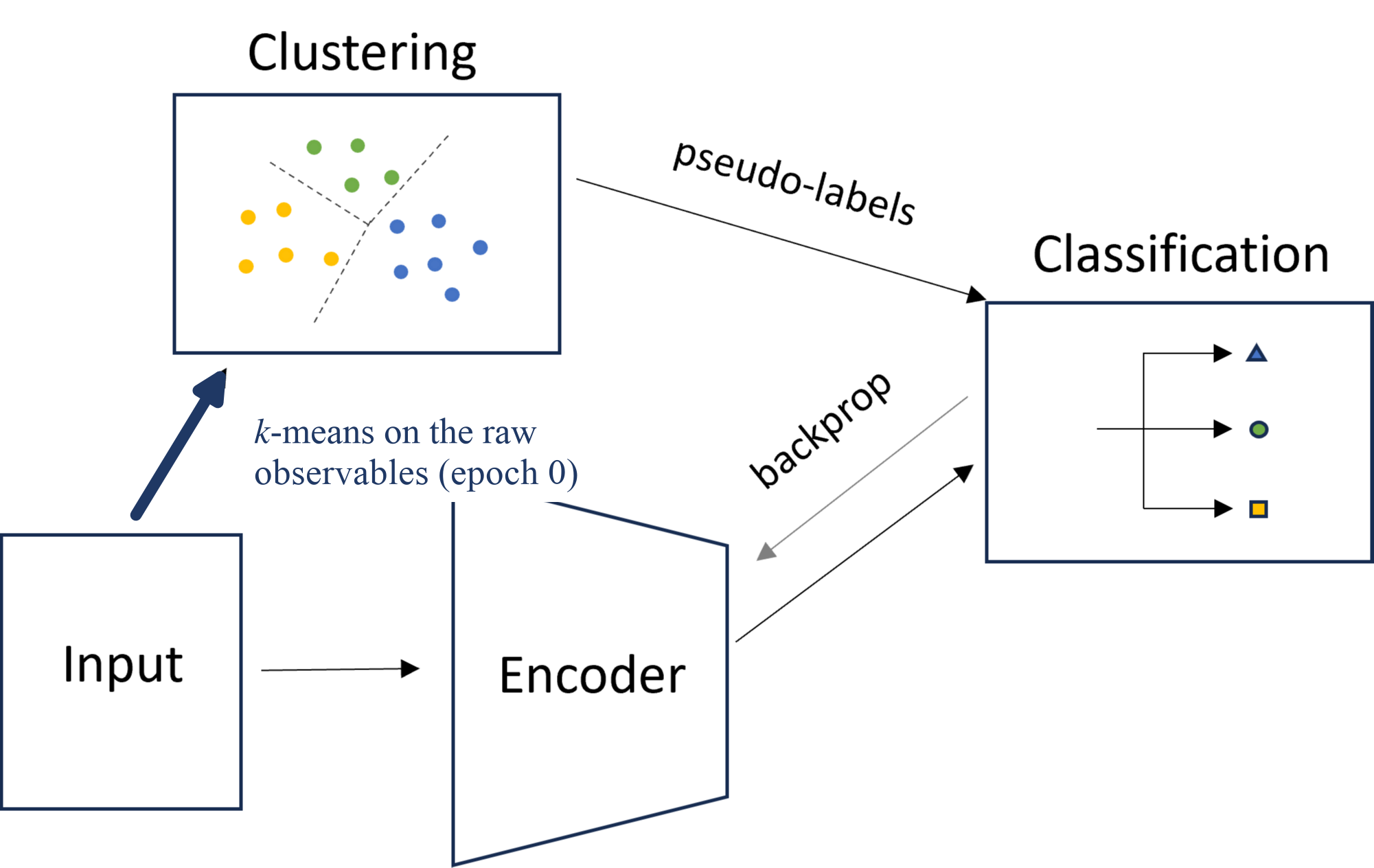}
\caption{Warmup: pseudo-labels are initialized by $k$-means on the
standardized observables, bypassing the randomly initialized encoder.}
\label{fig:warmup_phase}
\end{figure}

\subsection{Warmup}
\label{sec:warmup}

Without warmup the initial pseudo-labels come from $k$-means on the output of a
randomly initialized encoder, so the first epochs fit noise. The
\textit{warmup} variant instead initializes pseudo-labels by $k$-means on the
standardized observables (Fig.~\ref{fig:warmup_phase}), so the first
assignments already reflect physical plasma properties. This is the only place
where physical knowledge enters the pipeline; everything after it is
self-supplied.

\section{A Separability Bound for Neighborhood-Preserving Embeddings}
\label{sec:theory}

It is often assumed that two dimensions are insufficient to separate complex populations. Mathematically, this is false: a mapping that sends each cluster to a distinct point can separate any number of clusters in a 2D plane. The true obstruction is not the target dimension, but the objective function the embedding optimizes. This section formalizes this distinction.

\subsection{Setting and Notation}

Let $\mathcal{X} = \{x_1, \ldots, x_N\} \subset \mathbb{R}^{p}$ be the set of solar wind observations ($p = 7$), and let $\mathcal{C} = \{C_1, \ldots, C_k\}$ be a partition of $\mathcal{X}$ into $k$ non-empty clusters, where $\mathcal{C}(x)$ denotes the cluster containing $x$. An \emph{embedding} is a map $f : \mathcal{X} \to \mathbb{R}^{q}$, where $q$ is the target dimension ($q = 2$ for visualization baselines; $q = m$ for our latent encoder). The theoretical statements below hold for every $p$ and $q$, including $q \ge p$.

We use Euclidean distances in both spaces. Assuming pairwise distances $\|f(x) - f(x')\|$ are distinct for $x \neq x'$, nearest neighbors are unambiguous (the generic case for our data). Let $\kappa \in \mathbb{N}$ be a neighborhood size satisfying $\kappa < \min_i |C_i|$. 

Our argument relies on the \emph{cut fraction}, $\beta_\kappa$: the average proportion of an observation's $\kappa$ nearest neighbors that belong to a different cluster. It is zero when clusters are perfectly isolated and large when they interleave.

\begin{definition}[$\kappa$-NN graph]
For $Y = f(\mathcal{X})$, let $G_\kappa(f)$ be the directed graph on the index set $\{1, \ldots, N\}$ with an edge $n \to n'$ whenever $f(x_{n'})$ is among the $\kappa$ points of $Y \setminus \{f(x_n)\}$ closest to $f(x_n)$. It has exactly $\kappa N$ edges. Since vertices are indexed by observations, $\mathcal{C}$ labels the vertices of $G_\kappa(f)$ for any $f$.
\end{definition}

\begin{definition}[Cut fraction]
The \emph{cut fraction} of $\mathcal{C}$ in $G_\kappa(f)$ is
\[
\beta_\kappa(f, \mathcal{C}) \;=\; \frac{1}{\kappa N}
\big|\{\, n \to n' \in G_\kappa(f) \;:\;
\mathcal{C}(x_n) \neq \mathcal{C}(x_{n'}) \,\}\big|
\;\in\; [0, 1].
\]
This represents the fraction of neighbor links crossing cluster boundaries. Equivalently, $1 - \beta_\kappa$ is the $\kappa$-NN purity of the partition. We denote the input space cut fraction as $\beta_\kappa(\mathrm{id}, \mathcal{C})$, taking $f$ as the identity map.
\end{definition}

\begin{definition}[Neighborhood preservation]
An embedding $f$ is \emph{$\kappa$-neighborhood preserving} if $G_\kappa(f) = G_\kappa(\mathrm{id})$ as directed graphs. It is \emph{$\varepsilon$-approximately $\kappa$-neighborhood preserving} if their edge sets differ by at most $\varepsilon \kappa N$ edges.
\end{definition}

Neighborhood preservation is the theoretical goal of standard baseline embeddings: t-SNE matches neighbor probabilities, UMAP matches a fuzzy $k$-NN topology, and PHATE preserves diffusion neighborhoods. Because exact preservation is rarely achieved in practice, we state the approximate version.

\subsection{Two Bounds and an Escape}

\begin{proposition}[Exact preservation fixes the cut fraction]
\label{prop:one}
If $f$ is $\kappa$-neighborhood preserving, then
$\beta_\kappa(f, \mathcal{C}) = \beta_\kappa(\mathrm{id}, \mathcal{C})$ for every
partition $\mathcal{C}$. The same holds for any functional of the pair
$(G_\kappa, \mathcal{C})$, including the graph cut, conductance, modularity, and
spectral clustering objectives.
\end{proposition}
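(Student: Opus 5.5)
The argument is essentially definitional: I will show that $\beta_\kappa(f,\mathcal{C})$ depends on $f$ only through the graph $G_\kappa(f)$, and then invoke the hypothesis $G_\kappa(f)=G_\kappa(\mathrm{id})$.

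\emph{Step 1: vertex labels do not depend on $f$.} By the definition of the $\kappa$-NN graph, its vertex set is the index set $\{1,\dots,N\}$ for every $f$. The label of vertex $n$ is $\mathcal{C}(x_n)$, a property of the observation $x_n$ and not of its image $f(x_n)$. So the labelled graph $(G_\kappa(f),\mathcal{C})$ is determined entirely by the edge set of $G_\kappa(f)$ together with the fixed labelling $n\mapsto\mathcal{C}(x_n)$.

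\emph{Step 2: the cut fraction reads off the labelled graph.} The quantity
\[
\beta_\kappa(f,\mathcal{C})=\frac{1}{\kappa N}\big|\{n\to n'\in G_\kappa(f):\mathcal{C}(x_n)\neq\mathcal{C}(x_{n'})\}\big|
\]
is the size of a subset of the edge set, selected by a predicate on endpoint labels, divided by the constant $\kappa N$. The denominator does not depend on $f$, since every $G_\kappa(f)$ has exactly $\kappa N$ edges. The distinct-distance assumption guarantees that the $\kappa$ nearest neighbours are well defined, and $\kappa<\min_i|C_i|\le N$ guarantees that $\kappa$ neighbours exist.

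\emph{Step 3: apply the hypothesis.} If $f$ is $\kappa$-neighborhood preserving, the two edge sets coincide as sets of ordered pairs. Hence the selected subsets coincide, so their cardinalities agree, and $\beta_\kappa(f,\mathcal{C})=\beta_\kappa(\mathrm{id},\mathcal{C})$. Nothing in this argument fixes $\mathcal{C}$ in advance, so the equality holds for every partition.

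\emph{Step 4: extend to other graph functionals.} I will phrase Steps 1--3 once in abstract form: any functional $\Phi$ of the pair $(G,\mathcal{C})$, with $G$ a directed graph on $\{1,\dots,N\}$, satisfies $\Phi(G_\kappa(f),\mathcal{C})=\Phi(G_\kappa(\mathrm{id}),\mathcal{C})$, simply because the arguments are equal. It then suffices to check that each named objective is such a functional. Graph cut, conductance and modularity are built from edge counts between and within label classes, and from degrees, all of which are computed from the unweighted edge set. Spectral clustering operates on the adjacency matrix of $G_\kappa$, or of its symmetrization, and the Laplacian derived from it, so it is also determined by the graph.

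The main obstacle is a point of interpretation rather than of proof. These objectives are functionals of $G_\kappa$ only when they are computed on the \emph{unweighted} $\kappa$-NN graph. If one instead uses distance-weighted edges, such as Gaussian affinities in the embedding, they depend on $f$ beyond $G_\kappa(f)$ and the claim can fail. I will state this restriction explicitly. With the restriction in place, everything reduces to equality of edge sets.
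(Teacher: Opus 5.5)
Your proof is correct and follows the same route as the paper's. Both rest on the observation that the labelling $n \mapsto \mathcal{C}(x_n)$ lives on the index set rather than on positions, so once the edge sets coincide, any functional of $(G_\kappa, \mathcal{C})$ is unchanged, the cut fraction included; your explicit restriction to the unweighted $\kappa$-NN graph, excluding distance-weighted affinities, is a sensible clarification that the paper leaves implicit in the phrase ``functional of the pair $(G_\kappa,\mathcal{C})$''.
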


\begin{proof}
The edge set of $G_\kappa$ is unchanged by hypothesis, and the vertex labeling
$\mathcal{C}$ is carried by the index set rather than by position, so it is the
same for every $f$. Any functional taking these two as arguments,including
the cut fraction, therefore remains unchanged. \qed
\end{proof}

\begin{proposition}[Approximate preservation bounds the change]
\label{prop:approx}
If $f$ is $\varepsilon$-approximately $\kappa$-neighborhood preserving, then
$\big|\beta_\kappa(f, \mathcal{C}) - \beta_\kappa(\mathrm{id},
\mathcal{C})\big| \le \varepsilon$.
\end{proposition}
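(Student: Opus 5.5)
The plan is to compare the two edge sets directly and count only the edges that change. Write $E_f$ for the edge set of $G_\kappa(f)$ and $E_{\mathrm{id}}$ for that of $G_\kappa(\mathrm{id})$. Let $\mathrm{Cut} = \{\, n \to n' : \mathcal{C}(x_n) \neq \mathcal{C}(x_{n'}) \,\}$ be the set of all ordered pairs whose endpoints lie in different clusters. As in the proof of Proposition~\ref{prop:one}, the vertex labeling by $\mathcal{C}$ is attached to indices, not positions. So $\mathrm{Cut}$ is one fixed set of ordered pairs, independent of $f$. Then
\[
\beta_\kappa(f,\mathcal{C}) = \frac{|E_f \cap \mathrm{Cut}|}{\kappa N}
\quad\text{and}\quad
\beta_\kappa(\mathrm{id},\mathcal{C}) = \frac{|E_{\mathrm{id}} \cap \mathrm{Cut}|}{\kappa N}.
\]
It therefore suffices to bound $\big| |E_f \cap \mathrm{Cut}| - |E_{\mathrm{id}} \cap \mathrm{Cut}| \big|$ by $\varepsilon\kappa N$.

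Next, split each edge set into the common part $E_f \cap E_{\mathrm{id}}$ and the private parts $A = E_f \setminus E_{\mathrm{id}}$ and $B = E_{\mathrm{id}} \setminus E_f$. The common part contributes equally to both cut counts and cancels. This gives
\[
|E_f \cap \mathrm{Cut}| - |E_{\mathrm{id}} \cap \mathrm{Cut}| = |A \cap \mathrm{Cut}| - |B \cap \mathrm{Cut}|.
\]
The right-hand side is a difference of two nonnegative integers, one at most $|A|$ and the other at most $|B|$. Hence its absolute value is at most $\max(|A|,|B|)$.

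The key observation is that both graphs have exactly $\kappa N$ edges, by the definition of $G_\kappa$ together with the distinct-distance assumption. Consequently $|A| = \kappa N - |E_f\cap E_{\mathrm{id}}| = |B|$. The hypothesis that the edge sets differ by at most $\varepsilon\kappa N$ edges can be read in two ways:
\begin{itemize}
\item as the one-sided count $|A|=|B|$, in which case we get $\max(|A|,|B|) \le \varepsilon\kappa N$;
\item as the symmetric difference $|A|+|B| = 2|A|$, in which case we even get the stronger bound $\varepsilon\kappa N/2$.
\end{itemize}
In either reading, dividing by $\kappa N$ yields $|\beta_\kappa(f,\mathcal{C}) - \beta_\kappa(\mathrm{id},\mathcal{C})| \le \varepsilon$.

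The only delicate step is pinning down what ``differ by at most $\varepsilon\kappa N$ edges'' means, and noting the equal cardinalities $|A|=|B|$. Without the latter, one could only bound the change by $|A|+|B|$, which is not obviously $\le \varepsilon\kappa N$ under the one-sided reading. The rest is bookkeeping. As with Proposition~\ref{prop:one}, the argument never refers to $p$ or $q$, so the bound is dimension-free.
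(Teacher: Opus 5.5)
Your proof is correct and follows the same route as the paper's: cancel the common edges, write the change in the cut count as $|A\cap\mathrm{Cut}| - |B\cap\mathrm{Cut}|$ for the two private edge sets, and bound this difference of two nonnegative counts. Your added observation that $|A| = |B|$ (both graphs have exactly $\kappa N$ edges) is a valid refinement: it makes the argument hold under either reading of ``differ by at most $\varepsilon\kappa N$ edges,'' and under the paper's symmetric-difference reading it tightens the paper's bound $|E'\triangle E|$ to $|E'\triangle E|/2$, i.e.\ to $\varepsilon/2$.
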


\begin{proof}
Let $E$ and $E'$ be the edge sets of $G_\kappa(\mathrm{id})$ and $G_\kappa(f)$,
and let $c(\cdot)$ count edges crossing cluster boundaries. Then
$c(E') - c(E) = c(E' \setminus E) - c(E \setminus E')$; the two terms are
non-negative and enter with opposite signs, and each is at most
$|E' \triangle E|$, so $|c(E') - c(E)| \le |E' \triangle E| \le
\varepsilon \kappa N$. Dividing by $\kappa N$ yields the bound. \qed
\end{proof}

Proposition~\ref{prop:approx} governs methods like t-SNE and UMAP. Their output
cut fraction cannot deviate from the input value by more than their neighborhood
distortion $\varepsilon$. They might \emph{lose} local structure, but they cannot
\emph{create} physical separation absent in the original metric. Margin
objectives, however, are not subject to this bound because they actively rewire
the neighbor graph.

\begin{proposition}[A margin removes cross-cluster neighbors]
\label{prop:two}
Suppose $f$ satisfies, for every $x \in \mathcal{X}$,
\begin{equation}
\max_{p \,\in\, \mathcal{C}(x)} \|f(x) - f(p)\| \;+\; \alpha
\;\le\;
\min_{n \,\notin\, \mathcal{C}(x)} \|f(x) - f(n)\|
\label{eq:margincond}
\end{equation}
for some $\alpha > 0$. Then $\beta_\kappa(f, \mathcal{C}) = 0$ for every
$\kappa < \min_i |C_i|$, and every observation lies at distance at least
$\alpha$ further from any other cluster than from any member of its own.
\end{proposition}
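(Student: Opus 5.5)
The plan is a direct counting argument on the $\kappa$ nearest neighbours of each point. No earlier proposition is needed, because the margin condition \eqref{eq:margincond} already puts every same-cluster point strictly closer to $f(x)$ than every other-cluster point. I would first fix $x = x_n$ and introduce
\[
r_{\mathrm{in}}(x) = \max_{p \in \mathcal{C}(x)} \|f(x)-f(p)\|
\quad\text{and}\quad
r_{\mathrm{out}}(x) = \min_{n \notin \mathcal{C}(x)} \|f(x)-f(n)\|.
\]
The hypothesis is then $r_{\mathrm{in}}(x) + \alpha \le r_{\mathrm{out}}(x)$, and since $\alpha > 0$ this gives $r_{\mathrm{in}}(x) < r_{\mathrm{out}}(x)$.

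The second claim of the proposition follows at once. For any $p \in \mathcal{C}(x)$ and any $n \notin \mathcal{C}(x)$ we have
\[
\|f(x)-f(n)\| \ge r_{\mathrm{out}}(x) \ge r_{\mathrm{in}}(x)+\alpha \ge \|f(x)-f(p)\|+\alpha.
\]
I would record this first, because the first claim uses it.

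The core step is to show that every out-edge of $n$ in $G_\kappa(f)$ stays inside $\mathcal{C}(x_n)$. Order the points of $Y \setminus \{f(x_n)\}$ by distance to $f(x_n)$; the ordering is strict because distances are assumed distinct. By the inequality above, all $|\mathcal{C}(x_n)| - 1$ other members of the cluster come before every non-member in this ordering. The hypothesis $\kappa < \min_i |C_i|$ gives $\kappa \le |\mathcal{C}(x_n)| - 1$, so the first $\kappa$ points in the ordering are all cluster members. Hence every edge $n \to n'$ has $\mathcal{C}(x_n) = \mathcal{C}(x_{n'})$, no edge is counted in the cut set, and $\beta_\kappa(f,\mathcal{C}) = 0$. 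This holds for every admissible $\kappa$, since the argument used only $\kappa \le |\mathcal{C}(x_n)|-1$.

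The only delicate point is this index bookkeeping. The anchor itself is excluded from $Y\setminus\{f(x_n)\}$, so the pool of same-cluster candidates has size $|\mathcal{C}(x_n)|-1$, and it is the strict inequality $\kappa < \min_i|C_i|$ that makes that pool large enough. A smaller point is that the margin condition forces $f(p) \neq f(n)$ across clusters, so the exclusion of $f(x_n)$ and the assumption of distinct distances cause no ambiguity.
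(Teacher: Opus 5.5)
Your proposal is correct and follows essentially the same argument as the paper: the margin condition places all $|\mathcal{C}(x)|-1$ cluster-mates of $f(x)$ strictly closer than any point outside the cluster, and since $\kappa \le |\mathcal{C}(x)|-1$, every out-edge of $G_\kappa(f)$ stays inside the cluster, so $\beta_\kappa(f,\mathcal{C}) = 0$. Your explicit derivation of the second claim, the $\alpha$-gap between every same-cluster and every other-cluster distance, just restates the hypothesis, which the paper leaves implicit.
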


\begin{proof}
Fix $x$ and let $C = \mathcal{C}(x)$. Condition \eqref{eq:margincond} ensures
that all $|C| - 1$ points of $C \setminus \{x\}$ are close
point outside $C$. Since $\kappa < \min_i |C_i| \le |C|$, the $\kappa$ nearest
neighbors of $f(x)$ lie strictly within $C$, so no edge le
cluster. As $x$ was arbitrary, $\beta_\kappa(f, \mathcal{C}) = 0$. \qed
\end{proof}

Condition \eqref{eq:margincond} is the population-level equivalent of the
triplet margin in Eq.~\ref{eq:loss}: the loss vanishes on a triple $(x, p, n)$
exactly when $\|f(x) - f(p)\| + \alpha \le \|f(x) - f(n)\|$, and
\eqref{eq:margincond} requires this on \emph{every} triple, not only on those
drawn during training. Training Solar-CDC is therefore a stochastic surrogate
for this hypothesis, driving $\beta_\kappa$ toward zero as the loss vanishes and
attaining it only in the limit.

\begin{corollary}[The Separability Gap]
\label{cor:gap}
If initial cluster overlap exceeds the embedding's distortion
($\beta_\kappa(\mathrm{id}, \mathcal{C}) > \varepsilon$), no
$\varepsilon$-approximate neighborhood-preserving method can achieve perfect
separation ($\beta_\kappa = 0$). In contrast, a margin-satisfying embedding
always achieves $\beta_\kappa = 0$.
\end{corollary}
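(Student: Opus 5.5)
The corollary has two halves, and I will derive each directly from an earlier proposition.

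For the first half, I argue by contraposition using Proposition~\ref{prop:approx}. Let $f$ be $\varepsilon$-approximately $\kappa$-neighborhood preserving. That proposition gives
\[
\beta_\kappa(f,\mathcal{C}) \;\ge\; \beta_\kappa(\mathrm{id},\mathcal{C}) - \varepsilon .
\]
Under the hypothesis $\beta_\kappa(\mathrm{id},\mathcal{C}) > \varepsilon$, the right-hand side is strictly positive, so $\beta_\kappa(f,\mathcal{C}) > 0$. Perfect separation is therefore impossible for every such $f$, whatever the target dimension $q$. This holds because Proposition~\ref{prop:approx} uses neither $p$ nor $q$.

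I will also note that the hypotheses require each $C_i$ to be non-empty with $\kappa < \min_i |C_i|$. This makes $G_\kappa$ well defined with exactly $\kappa N$ edges, which is the normalisation behind the $\varepsilon\kappa N$ edge budget. The exact case $\varepsilon = 0$ is covered as well and agrees with Proposition~\ref{prop:one}.

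For the second half, I will read ``margin-satisfying embedding'' as an $f$ that satisfies condition~\eqref{eq:margincond} for some $\alpha>0$ at every $x\in\mathcal{X}$. Proposition~\ref{prop:two} then gives $\beta_\kappa(f,\mathcal{C})=0$ for every admissible $\kappa$ directly. This conclusion does not depend on $\beta_\kappa(\mathrm{id},\mathcal{C})$ or on $\varepsilon$. That independence is the content of the word ``always'': the margin embedding is not bound to the input graph, and in particular it is not $\varepsilon$-preserving for any $\varepsilon < \beta_\kappa(\mathrm{id},\mathcal{C})$, by the first half.

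The main obstacle is the interpretive gap in ``always achieves'', since the corollary does not assert that such an $f$ exists. I would close this gap with an explicit construction. Send every member of $C_i$ to a small ball of radius $r$ around a point $c_i\in\mathbb{R}^q$, with the centres pairwise at distance at least $2r+\alpha+r$, using distinct perturbations so that all pairwise distances are distinct. A triangle-inequality check then verifies \eqref{eq:margincond}: within-cluster distances are at most $2r$, and cross-cluster distances are at least $\alpha+2r$. This works even for $q=1$, matching the section's remark that dimension is not the obstruction.

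Finally, I will stress that Solar-CDC only approximates this condition stochastically, as already noted after Proposition~\ref{prop:two}. The corollary concerns exact satisfaction of \eqref{eq:margincond}.
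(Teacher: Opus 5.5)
Your proposal is correct and follows the paper's proof. Both halves are read off directly from Proposition~\ref{prop:approx}, via $\beta_\kappa(f,\mathcal{C}) \ge \beta_\kappa(\mathrm{id},\mathcal{C}) - \varepsilon > 0$, and from Proposition~\ref{prop:two}, exactly as the paper does.

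The existence construction is not needed, since the second clause is only the conditional statement of Proposition~\ref{prop:two}. As written it also has an arithmetic slip: centres at distance $3r+\alpha$ guarantee cross-cluster distances of only $r+\alpha$. You need spacing of at least $4r+\alpha$ to get the $2r+\alpha$ that condition~\eqref{eq:margincond} requires.
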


\begin{proof}
Immediate from Propositions~\ref{prop:approx} and~\ref{prop:two}: the former
bounds the cut fraction strictly above zero
($\beta_\kappa(f, \mathcal{C}) \ge \beta_\kappa(\mathrm{id}, \mathcal{C})
- \varepsilon > 0$), while the latter guarantees it reaches zero. \qed
\end{proof}

\subsection{The Hypothesis Holds on These Data}

The physical reality of the slow solar wind matches the premise of Corollary~\ref{cor:gap} (Sec.~\ref{sec:data}). Because the two slow-wind channels overlap in bulk speed and density, their observations frequently appear as mutual nearest neighbors in the raw observable space.

Table~\ref{tab:beta} validates this at $\kappa = 10$. In the raw input space the cut fraction is $\beta_{10} = 0.064$. The two embeddings that come closest to preserving the neighbor graph, t-SNE and UMAP, return $0.071$ and $0.084$, within the range Proposition~\ref{prop:approx} allows. PHATE and PCA drift further, to $0.148$ and $0.174$, because they preserve diffusion distance and variance rather than the graph itself and are covered by neither proposition. None of the four moves the cut fraction downward.

In contrast, methods with margin objectives (TriMap and Solar-CDC) are free to rewrite the neighbor graph, and they demonstrate that this freedom is necessary but not sufficient. TriMap heavily rewires the graph ($\beta_{10} = 0.407$) toward an unphysical partition of its own making (see Sec.~\ref{sec:baselines}). Solar-CDC, however, successfully drives the cut fraction down to $0.001$, a sixty-fold reduction completely forbidden to neighborhood-preserving maps by Proposition~\ref{prop:one}.

\begin{table}[!htbp]
\centering
\small
\caption{Cut fraction $\beta_{10}$ ($\kappa = 10$, $8{,}000$-point subsample).
As Props.~\ref{prop:one}--\ref{prop:approx} require, neighborhood-preserving
methods stay within $\varepsilon$ of the input metric, and none improves on it; Solar-CDC reduces it to near zero.}

\label{tab:beta}
\begin{tabular*}{\linewidth}{@{\extracolsep{\fill}}lcc}
\toprule
Space & $\beta_{10}$, recovered partition & $\beta_{10}$, threshold partition \\
\midrule
Input, seven observables & 0.064 & 0.131 \\
\midrule
\multicolumn{3}{l}{\textit{neighborhood preserving}} \\
t-SNE \cite{tsne} ($q = 2$)   & 0.071 & 0.158 \\
UMAP \cite{umap} ($q = 2$)    & 0.084 & 0.197 \\
PHATE \cite{phate} ($q = 2$)  & 0.148 & 0.228 \\
PCA \cite{pca_auto} ($q = 2$) & 0.174 & 0.215 \\
\midrule
\multicolumn{3}{l}{\textit{free to change the neighbor graph}} \\
TriMap \cite{trimap} ($q = 2$) & 0.407 & 0.439 \\
Solar-CDC latent  & \textbf{0.001} & 0.261 \\
\bottomrule
\end{tabular*}
\end{table}

\begin{remark}[What the result does not say]
Condition~\eqref{eq:margincond} is stated with respect to a given partition, and the objective drives $\beta_\kappa$ toward zero for \emph{any} partition it is handed, even an unphysical one. Table~\ref{tab:beta} (right column) illustrates this directly: when evaluated against the unseen threshold partition, the learned representations perform worse than the raw input metric. This explains why a margin objective \emph{can} separate populations where neighborhood-preserving maps fail, but crucially, it is the initialization of the pseudo-labels that dictates \emph{which} partition is ultimately separated. 

For the dual-origin problem, this is exactly why the warmup phase matters. Solar-CDC initializes pseudo-labels using $k$-means directly on the raw composition ratios, which act as frozen-in tracers of the coronal source. Consequently, the partition the margin sharpens is physically meaningful from the outset. Sec.~\ref{sec:external} tests whether that alignment survives.
\end{remark}

\section{Results}
\label{sec:results}

\textit{Setting.} We evaluate 30 configurations across a grid of attention heads ($\in \{1,2,4,8,16,32\}$) and latent dimensions ($m \in \{2,4,8,16,32\}$), training each for 200 epochs. This grid is run twice: once using solely the seven physical observables, and once with two appended metadata columns (Sec.~\ref{sec:cols}). All 60 runs use the physical warmup and a pseudo-label update period of $p = 4$. Training averages $167$~seconds per configuration on an Apple M4 Max using the PyTorch Metal backend.

\textit{Clustering Quality Evaluation.}
\label{sec:protocol}
Cluster quality is measured by the silhouette coefficient, computed on the latent codes using the partition predicted by the classification head. For a point $x$ with mean intra-cluster distance $a(x)$ and mean nearest-cluster distance $b(x)$, the coefficient is $\big(b(x)-a(x)\big)/\max\{a(x),b(x)\}$. Averaged across all points, this yields a score in $[-1,1]$, where higher values indicate tighter, better-separated clusters.

The
hyperparameter grid is trained on an $80\%$ seeded split ($24{,}482$
observations) and scored exactly on the held-out $20\%$ ($6{,}120$). All subsequent experiments are fitted and scored exactly on the full $30{,}602$ observations, with one exception. Because t-SNE, UMAP, PHATE and TriMap scale superlinearly in $N$, the two tables that compare against them use a fixed $8{,}000$-point subsample. In the cut-fraction comparison of
Table~\ref{tab:beta} that subsample is the same for every method, ours
included. In the silhouette comparison of Table~\ref{tab:baselines} it covers
the baselines only: the Solar-CDC entry there is the grid figure, scored on the
held-out split, which is one further reason that comparison is indicative rather
than exact. 

We present two types of comparisons. Internal ablations of Solar-CDC follow identical protocols and are directly comparable. Conversely, dimensionality-reduction baselines are scored inside their own generated representations: the standard reporting protocol for these methods. Because the evaluation spaces differ, comparisons between Solar-CDC and these baselines are indicative rather than exact.

\textit{Solar-CDC Hyperparameter Grid.} Table~\ref{tab:grid} details the silhouette scores across the $6\times5$ grid of attention heads and latent dimensions. All 30 runs successfully converged to three non-empty clusters. The optimal configuration uses a single attention head and a 2-D latent code, achieving a silhouette of $0.869$ (grid mean $0.795$). As expected for a distance-based metric, the silhouette score decreases monotonically as the latent dimension increases.

To rigorously evaluate these parameters, we analyze the grid as a randomized block design using the non-parametric Friedman test. The latent dimension strongly dictates performance ($\chi^2 = 22.3$, $p = 1.8\times10^{-4}$, using head counts as blocks). In contrast, the number of attention heads has no significant effect ($\chi^2 = 7.4$, $p = 0.19$, using latent dimensions as blocks). We conclude that a single attention head suffices; the encoder's architectural capacity is not the binding constraint for this task.

\begin{table}[tbp]
\centering
\small
\caption{Silhouette over the Solar-CDC hyperparameter grid (30{,}602 observations, seven
observables, $k=3$, 200 epochs, pseudo-label period $p=4$, Euclidean triplet
distance). All 30 runs converged to three non-empty clusters.}
\label{tab:grid}
\begin{tabular*}{\linewidth}{@{\extracolsep{\fill}}lccccc}
\toprule
& \multicolumn{5}{c}{Latent dimension $m$} \\
\cmidrule(lr){2-6}
Heads & 2 & 4 & 8 & 16 & 32 \\
\midrule
 1 & \textbf{0.869} & 0.828 & 0.820 & 0.755 & 0.725 \\
 2 & 0.829 & 0.806 & 0.773 & 0.748 & 0.764 \\
 4 & 0.866 & 0.814 & 0.792 & 0.752 & 0.728 \\
 8 & 0.840 & 0.821 & 0.809 & 0.756 & 0.772 \\
16 & 0.832 & 0.834 & 0.779 & 0.775 & 0.740 \\
32 & 0.853 & 0.823 & 0.797 & 0.780 & 0.760 \\
\bottomrule
\end{tabular*}
\end{table}

\subsection{Only the Physical Observables Are Needed.} 
\label{sec:cols}
The Solar Orbiter archive \cite{zhao2024classification} contains two metadata columns alongside the seven physical observables: observation time and a previously assigned class label. To rigorously test their utility, we repeated the entire 30-configuration grid with both columns appended to the input.

Adding this metadata consistently degrades performance. The silhouette score dropped in 23 of the 30 configurations (an average decrease of $0.017$), and the optimal configuration's score fell from $0.869$ to $0.857$. Because each configuration serves as its own control, we can confirm this penalty is highly statistically significant across the matched pairs (Wilcoxon signed-rank $p = 1.2\times10^{-3}$; paired $t$-test $p = 6.9\times10^{-4}$). 

We conclude that the metadata columns introduce a small but consistent penalty rather than a benefit. Consequently, all subsequent experiments rely strictly on the seven physical observables. Crucially, no learned result reported in this paper depends on the archived class label.

\begin{table}[!htbp]
\centering
\small
\caption{Silhouette of each dimensionality-reduction baseline inside its own
two-dimensional embedding (8{,}000-point subsample; every method here is
superlinear in $N$), against Solar-CDC in its learned representation. The
grouping is the one Sec.~\ref{sec:theory} predicts: the methods that aim to
preserve neighborhoods sit together, and the one that optimizes triplet order
does not.}
\label{tab:baselines}
\begin{tabular*}{\linewidth}{@{\extracolsep{\fill}}lccc}
\toprule
Reduction & K-Means & GMM & Agglomerative \\
\midrule
\multicolumn{4}{l}{\textit{neighborhood preserving}} \\
PCA \cite{pca_auto} (2-D)      & 0.400 & 0.360 & 0.414 \\
t-SNE \cite{tsne} ($p{=}30$)   & 0.428 & 0.427 & 0.388 \\
t-SNE ($p{=}100$)              & 0.438 & 0.424 & 0.422 \\
t-SNE ($p{=}200$)              & 0.450 & 0.440 & 0.372 \\
UMAP \cite{umap} (15 nn)       & 0.443 & 0.452 & 0.420 \\
UMAP (30 nn)                   & 0.445 & \textbf{0.454} & 0.429 \\
UMAP (50 nn)                   & 0.453 & 0.443 & 0.434 \\
PHATE \cite{phate} (15 nn)     & 0.448 & 0.419 & 0.383 \\
PHATE (30 nn)                  & 0.450 & 0.414 & 0.434 \\
PHATE (50 nn)                  & 0.451 & 0.413 & 0.420 \\
\midrule
\multicolumn{4}{l}{\textit{triplet based}} \\
TriMap \cite{trimap} (8 inliers) & 0.795 & 0.662 & \textbf{0.824} \\
TriMap (12 inliers)              & 0.753 & 0.659 & 0.821 \\
TriMap (20 inliers)              & 0.752 & 0.628 & 0.668 \\
\midrule
\multicolumn{4}{l}{Solar-CDC, learned representation \hfill \textbf{0.869}} \\
\bottomrule
\end{tabular*}
\end{table}

\subsection{Comparing Solar-CDC with Dimensionality-Reduction Clustering Baselines}
\label{sec:baselines}

Table~\ref{tab:baselines} evaluates the standard pipeline: reduce the seven observables to two dimensions, then cluster. To reflect standard reporting practice, each method is scored by its silhouette coefficient computed strictly inside its own 2-D embedding.

The results split exactly along the theoretical fault line drawn in Sec.~\ref{sec:theory}. The thirty neighborhood-preserving combinations (ten reductions each followed by three clusterings) stall in a narrow band ($0.426 \pm 0.025$, peaking at $0.454$). This is what Props.~\ref{prop:one}--\ref{prop:approx} lead one to expect. An embedding built to keep the neighbor graph cannot change the cut structure of a partition, so it has no mechanism for sharpening a boundary the input metric does not already carry. In stark contrast, Solar-CDC reaches $0.869$, and even its weakest configuration ($0.725$) strictly dominates the strongest baseline in this group. The two sets do not overlap at all (Cliff's $\delta = 1.00$, Mann-Whitney $p = 1.5\times10^{-11}$).

TriMap provides an informative exception. Because it optimizes triplet order rather than neighborhood structure \cite{trimap}, it is exempt from the bound of Prop.~\ref{prop:one} and achieves a highly competitive silhouette of $0.824$. Solar-CDC stays ahead of its nine configurations by a smaller margin (Mann-Whitney $p = 0.012$, Cliff's $\delta = 0.50$: a random Solar-CDC run outscores a random TriMap run three times out of four, rather than always). TriMap's high score is exactly what the theory predicts: the deciding factor is not the target dimension (which is two for every baseline), but whether the objective is free to rewrite the neighbor graph.

\textit{Solar-CDC is Physically Aware.} Escaping the theoretical bound, as TriMap does, does not guarantee recovering the physics. To measure physical accuracy here and in Sec.~\ref{sec:external}, we construct a three-way reference label of our own. It applies established $\mathrm{O}^{7+}/\mathrm{O}^{6+}$ thresholds from the heliophysics literature \cite{xu2015new,damicis2021alfvenic,lepri2013solar} directly to the raw observations, and it ignores the class label shipped with the archive. Evaluated against this physical baseline, TriMap's partition fails outright. It yields an NMI of $0.036$, a matched accuracy of $0.465$ (worse than a constant labeling achieves), and an adjusted Rand index of $-0.021$ (slightly worse than random chance), all despite its impressive $0.824$ silhouette. Ironically, the neighborhood-preserving embeddings, with silhouettes half as large, capture the physics much better (NMI $0.33$--$0.42$). TriMap proves that a geometrically tight partition is not necessarily a physically meaningful one. Solar-CDC is the only method in the comparison that successfully achieves both.

\subsection{Number of Solar Wind Clusters}
\label{sec:k}

Table~\ref{tab:k} evaluates $k \in \{2,3,4,5\}$ using ten random seeds per value. The cluster count drives significant variation (Kruskal-Wallis $p = 3.2\times10^{-5}$). Both $k = 4$ and $k = 5$ perform strictly worse than $k = 3$ (Welch $p < 4\times10^{-4}$), so the data do not support dividing the wind more finely than three ways.

Crucially, however, the silhouette coefficient cannot statistically distinguish $k = 2$ from $k = 3$ ($\Delta\mu = 0.012$ vs.\ seed spread $0.033$, $p = 0.42$). This statistical dead heat is the dual-origin problem in a nutshell. A purely geometric index is dominated by the primary fast/slow split, which exhibits massive differences in speed and density. The secondary split, separating the two slow-wind streams, is purely compositional and kinematically invisible, contributing almost nothing to the distance metric. Had a third cluster been geometrically obvious, the origin of the slow wind would not have remained an open question for twenty years.

We therefore adopt $k = 3$ strictly on the physical grounds of Sec.~\ref{sec:intro-physics}: the existence of three coronal source regions with distinct freeze-in temperatures. While geometry alone cannot break the tie between $k=2$ and $k=3$, the three-way partition is supported after the fact by its agreement with the external composition taxonomy (Sec.~\ref{sec:external}).

\begin{table}[!htbp]
\centering
\small
\caption{Cluster-count sweep, ten seeds per $k$ (mean $\pm$ s.d.), with the
silhouette computed exactly on all $30{,}602$ observations. Both $k = 2$ and
$k = 3$ are significantly better than $k = 4$ and $k = 5$; they are not
distinguishable from each other.}
\label{tab:k}
\begin{tabular*}{\linewidth}{@{\extracolsep{\fill}}ccl}
\toprule
$k$ & Silhouette & vs.\ $k = 3$ \\
\midrule
2 & $0.856 \pm 0.033$ & $p = 0.42$ \\
3 & $0.843 \pm 0.035$ & --- \\
4 & $0.765 \pm 0.044$ & $p = 3.9\times10^{-4}$ \\
5 & $0.688 \pm 0.089$ & $p = 2.6\times10^{-4}$ \\
\bottomrule
\end{tabular*}
\end{table}

\subsection{Agreement with the Published Composition Taxonomy}
\label{sec:external}

A high silhouette score establishes geometric tightness, but as TriMap demonstrates (Table~\ref{tab:baselines}), tight clusters need not be physical. To test whether Solar-CDC recovers solar wind physics rather than convenient geometry, we further evaluate it against an independent, domain-specific reference.

The heliophysics literature classifies solar wind using strictly defined $\mathrm{O}^{7+}/\mathrm{O}^{6+}$ thresholds: $<0.145$ for the coronal-hole interior \cite{xu2015new}, $0.145$--$0.20$ for the coronal-hole boundary \cite{damicis2021alfvenic}, and $>0.20$ for the streamer belt \cite{lepri2013solar}. Applying these thresholds to our dataset yields reference labels in three classes of $15{,}321$, $5{,}309$, and $9{,}972$ observations, respectively. We score the learned partitions against them using Normalized Mutual Information (NMI), the Adjusted Rand Index (ARI), and Matched Accuracy. For context, random assignment yields an NMI of $0.000$ and $0.337$ accuracy, while predicting the majority class yields $0.501$ accuracy.

We must address an obvious confounder first: because $\mathrm{O}^{7+}/\mathrm{O}^{6+}$ is one of the seven model inputs, evaluating against a reference defined by it is inherently circular. To separate learned physics from the mere isolation of a single input feature, we introduce a \emph{6-observable control} in the lower half of Table~\ref{tab:external}. Here, the model is retrained with $\mathrm{O}^{7+}/\mathrm{O}^{6+}$ withheld from the input entirely, then scored against the same reference labels.

\begin{table}[!htbp]
\centering
\small
\caption{Agreement with the external $\mathrm{O}^{7+}/\mathrm{O}^{6+}$ taxonomy of \cite{xu2015new,damicis2021alfvenic,lepri2013solar}, defined by published thresholds rather than by archived class labels, and unseen during training. Solar-CDC values are means over five seeds (16 heads, $m = 2$). The 6-observable control is trained without $\mathrm{O}^{7+}/\mathrm{O}^{6+}$ and scored against the same reference.}
\label{tab:external}
\begin{tabular*}{\linewidth}{@{\extracolsep{\fill}}llccc}
\toprule
Method & Inputs & NMI & ARI & Matched acc. \\
\midrule
Random labeling             & ---   & 0.000 & 0.000 & 0.337 \\
Largest class only          & ---   & 0.000 & 0.000 & 0.501 \\
TriMap \cite{trimap} + Agglom.\  & seven & 0.036 & $-0.021$ & 0.465 \\
\midrule
$k$-means                   & seven & 0.348 & 0.284 & 0.628 \\
Solar-CDC                   & seven & $\mathbf{0.422 \pm 0.079}$ &
                                      $0.385 \pm 0.088$ &
                                      $\mathbf{0.706 \pm 0.046}$ \\
\midrule
$k$-means, control          & six   & 0.311 & 0.259 & 0.613 \\
Solar-CDC, control          & six   & $0.321 \pm 0.019$ & $0.307 \pm 0.019$ &
                                      $0.653 \pm 0.015$ \\
\bottomrule
\end{tabular*}
\end{table}

\textit{Two Results from the External Check.} Two key results emerge from Table~\ref{tab:external}. First, on the full 7-observable dataset, Solar-CDC outperforms standard $k$-means by $0.074$ in NMI and $0.078$ in matched accuracy. The accuracy gap strictly holds up against the seed-to-seed spread ($p = 0.019$ over five seeds). The NMI gap is less strictly significant ($p = 0.105$) because NMI varies more heavily between seeds ($\pm 0.079$).

Second, and more importantly, the 6-observable control succeeds. Blinded to the very feature the reference is built on, Solar-CDC still reaches an NMI of $0.321$ and an accuracy of $0.653$, far above the naive floors, and does so with a quarter of the original seed-to-seed variance. The remaining six observables therefore carry enough entangled compositional structure to recover the true charge-state boundaries. This physical robustness is not unique to our architecture: standard $k$-means on the same six observables reaches an NMI of $0.311$, statistically indistinguishable from Solar-CDC ($p = 0.30$), though Solar-CDC retains a clear edge in matched accuracy ($p = 0.004$). 

Ultimately, the control establishes that the recovered partition is driven by the joint composition space rather than the memorization of a single ratio. In that sense, the clusters Solar-CDC finds are not merely geometrically tight; they are physically real.

\begin{figure}[!htbp]
\centering
\includegraphics[width=0.88\linewidth]{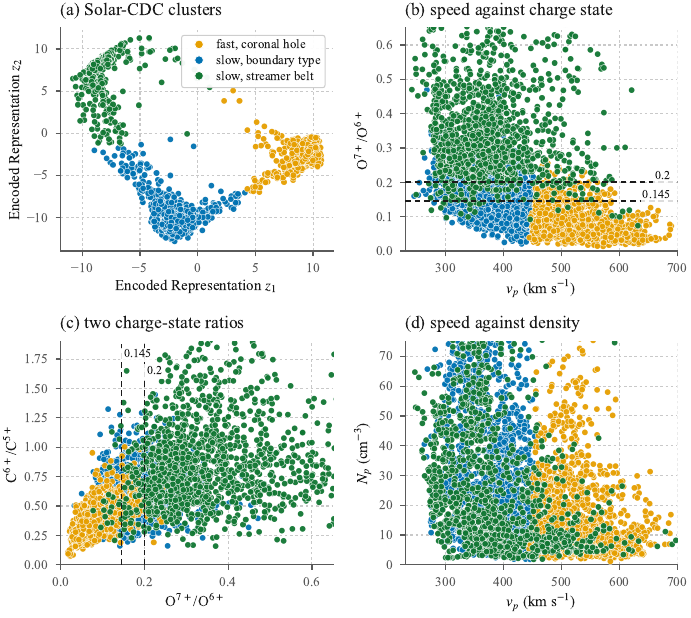}
\caption{The three recovered clusters in learned and physical spaces (colors consistent throughout; $6{,}000$ random observations sampled for legibility). \textbf{(a)} The 2-D latent codes. \textbf{(b)} Bulk speed vs.\ $\mathrm{O}^{7+}/\mathrm{O}^{6+}$, with dashed lines marking the published coronal-hole ($0.145$) and streamer-belt ($0.20$) thresholds. The two slow-wind populations share an identical speed range but divide sharply at the compositional boundary. \textbf{(c)} Correlated charge-state ratios. The separation spans the broader composition space rather than relying on a single variable. \textbf{(d)} Bulk speed vs.\ density. The two slow clusters perfectly interleave. Together, panels (b) and (d) visualize the core premise of our method: the distinct origins of the slow wind are clearly separable by composition, yet completely invisible to classical kinematic properties.}
\label{fig:latent}
\end{figure}

\begin{table}[!htbp]
\centering
\small
\caption{Mean plasma properties per cluster, ordered by charge state (16 heads, $m = 2$). Cluster~0 isolates the fast wind, while Clusters~1 and~2 partition the slow wind. This compositional profile represents the 11 of 24 runs that successfully resolve the boundary population. For physical context, established $\mathrm{O}^{7+}/\mathrm{O}^{6+}$ thresholds are: $<0.145$ (coronal-hole interior), $0.145$--$0.20$ (boundary), and $>0.20$ (streamer belt) \cite{lepri2013solar,xu2015new,damicis2021alfvenic}. \textit{Note:} Silhouette scores here are computed across the full dataset rather than the held-out split, precluding direct comparison with Table~\ref{tab:grid}.}

\label{tab:clusters}
\begin{tabular*}{\linewidth}{@{\extracolsep{\fill}}lrrrrrrr}
\toprule
Cluster & $n$ & $\bar v_p$ & $\bar N_p$ &
$\overline{\mathrm{O}^{7+}\!/\mathrm{O}^{6+}}$ &
$\overline{\mathrm{C}^{6+}\!/\mathrm{C}^{4+}}$ &
$\overline{\mathrm{Fe/O}}$ & $\overline{Q_O}$ \\
\midrule
0\ coronal hole  & 10{,}415 & 526 & 14.2 & 0.080 & 2.14 & 0.146 & 6.07 \\
1\ boundary type & 12{,}581 & 374 & 28.1 & 0.160 & 4.37 & 0.154 & 6.15 \\
2\ streamer belt & \phantom{0}7{,}606 & 379 & 32.3 & 0.400 & 11.00 & 0.220 & 6.39 \\
\bottomrule
\end{tabular*}
\end{table}

\subsection{Physical Interpretation of the Clusters}

Table~\ref{tab:clusters} and Figs.~\ref{fig:latent} and~\ref{fig:clusters} detail the mean plasma properties of the three recovered clusters. When ordered by charge state, the progression is strictly monotonic across every compositional variable, mirroring the ordering of the three known coronal source regions.

The first population represents the fast, tenuous, and cool-sourced wind: its mean $\mathrm{O}^{7+}/\mathrm{O}^{6+}$ ratio is $0.080$, safely below the $0.145$ coronal-hole threshold. The second population is slow and dense, averaging $0.160$, squarely inside the $0.145$--$0.20$ window predicted for coronal-hole-boundary plasma \cite{damicis2021alfvenic}. The third population is similarly slow but originates from a much hotter source, registering at $0.400$, double the $0.20$ streamer-belt threshold.

The contrast between Clusters~1 and~2 directly exposes the dual-origin problem. They differ by a negligible $4$~km/s in bulk speed, yet by a massive factor of $2.5$ in their $\mathrm{O}^{7+}/\mathrm{O}^{6+}$ ratios. This is precisely the regime where classical kinematic thresholds collapse, and it shows that the two slow-wind streams can be disentangled only through their frozen-in composition.

\begin{figure}[!htbp]
\centering
\includegraphics[width=0.90\linewidth]{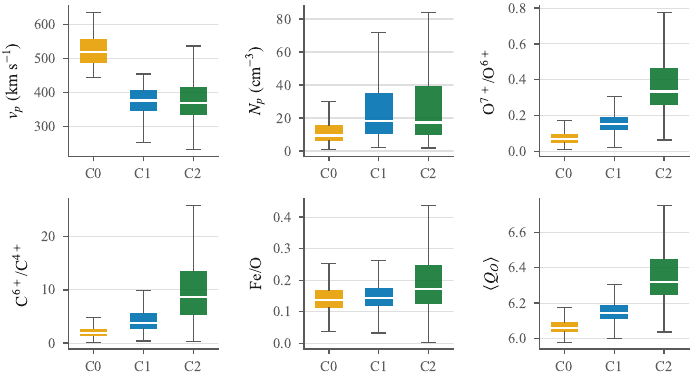}
\caption{Distribution of each observable per cluster (boxes: quartiles;
whiskers: $1.5\times$IQR; outliers omitted). Separation is monotone and largest
in the charge-state variables, and smallest in bulk speed, where clusters 1 and
2 overlap almost completely.}
\label{fig:clusters}
\end{figure}

\section{Discussion}

The preceding subsections establish a partition that is geometrically tight, aligns with an independent physical taxonomy, and survives the ablation of the exact variable defining that taxonomy. This section outlines the physical implications and limitations of these results.

\textit{Evidence for the Dual-Origin Picture.} The recovered partition supports the dual-origin hypothesis: an intermediate-composition slow-wind population exists, and it is cleanly separable from the streamer-belt wind by composition alone, despite overlapping almost entirely in bulk speed. Crucially, this compositional evidence is extracted completely without labels. It does not, however, establish that the population is \emph{Alfv\'enic}. Alfv\'enicity is determined by magnetic and velocity fluctuations, which are absent from our seven observables. We therefore conservatively designate Cluster~1 as \emph{boundary type}, and treat its identity as Alfv\'enic slow wind as a plausible identification that the present data cannot test.

\textit{Interpreting the Silhouette Scores.} Following standard reporting practice, every silhouette score in this paper is computed within the specific representation space each model learns. Consequently, Table~\ref{tab:baselines} measures how effectively each method separates populations within its \emph{own} generated embedding, rather than comparing them on a shared metric footing. We adopt this protocol to ensure fair, published-standard comparisons against the dimensionality-reduction baselines, but explicitly note that cross-method silhouette comparisons remain indicative rather than absolute.

\textit{Limitations.}
(1)~\textit{Missing magnetic data:} as noted, incorporating magnetic fluctuation statistics to measure Alfv\'enicity directly is the single most valuable extension.
(2)~\textit{Model selection:} configuration choice currently relies on the external reference labels. Developing a fully unsupervised criterion that reliably correlates with physical correctness remains an open challenge.
(3)~\textit{Reference baseline:} the external taxonomy relies on published $\mathrm{O}^{7+}/\mathrm{O}^{6+}$ thresholds \cite{xu2015new,damicis2021alfvenic,lepri2013solar}, so agreement measures consistency with the current literature, not absolute ground truth.
(4)~\textit{Temporal scope:} the dataset spans Solar Orbiter's observations during the rising phase of solar cycle 25. Generalization across solar cycles and missions remains untested.
(5)~\textit{Architectural excess:} the best configuration requires only a single attention head and a low-dimensional latent space, meaning encoder capacity is not the active ingredient here; a simpler architecture may suffice for this task.
 
% =============================================================================
\section{Conclusion}

We presented Solar-CDC, a self-supervised contrastive deep clustering method for in-situ solar wind composition, evaluating it on $30{,}602$ Solar Orbiter observations entirely without labels at fitting time. It achieves a silhouette score of $0.869$ in its learned representation, vastly outperforming thirty combinations of dimensionality reduction and clustering (which peak at $0.454$ with zero distribution overlap). Furthermore, the recovered clusters are ordered monotonically across every composition variable, perfectly mirroring the established charge-sequence of coronal source regions.

These insights extend far beyond this specific archive. The performance gap between learned metrics and standard 2-D embeddings is not constrained by target dimensionality, but by the objective function. An embedding designed to preserve neighborhood graphs cannot reduce a partition's cut fraction below its input-metric baseline, regardless of available dimensions. In contrast, a margin objective can rewrite the graph and drive the cut fraction to zero. This principle applies generally whenever populations of interest are mutual neighbors in measured space. That is the standard scenario when discriminating signals are distributed across multiple observables. Crucially, because a margin objective sharpens whichever partition it receives, initialization dictates success: our warmup phase, anchored directly in physical plasma measurements rather than random initialization, ensures the correct partition is targeted.

\textit{Future Work.} Three natural extensions emerge. \textit{(1)~Broader in-situ archives:} Historic missions such as ACE (1998--2011), Wind (1995--2004), and Ulysses (1992--2007) provide complementary composition suites, offering a rigorous test of how structural recovery degrades as observables are removed across three solar cycles and diverse heliospheric vantage points. \textit{(2)~Multimodal integration:} Incorporating magnetic field vectors and fluctuation statistics will directly test the hypothesis linking our boundary-type population to Alfv\'enic slow wind, converting a plausible physical identification into a direct measurement. \textit{(3)~Cross-domain applications:} Because our theoretical bound (Sec.~\ref{sec:theory}) is domain-agnostic, it applies universally wherever populations are near-neighbors and the discriminating signal spans a joint space rather than a single coordinate. Spectral archives, geochemical assays, and single-cell measurements share this exact topological structure, and standard visualization tools should be expected to fail there for the exact same geometric reasons.

\vspace{0.5em}
{\small Data and code: \url{https://github.com/hank08819/Solar-CDC}}

\subsubsection{Acknowledgements}
This work is partially supported by NASA Grant 80NSSC22K1015, NSF 2229138, and
the McCollum Endowed Chair startup fund.

\bibliographystyle{splncs04}
\bibliography{main}

% =============================================================================
\appendix

\section{Supplementary Results}
\label{app:results}

Table~\ref{tab:gridfull} lists every configuration of the grid with the
seven observables and, for comparison, the same grid with the two additional
columns of Sec.~\ref{sec:cols} appended to the input.

\begin{table}[h]
\centering
\small
\caption{Silhouette over the full grid, with the seven observables and with
the two extra columns of Sec.~\ref{sec:cols} appended to the input
(30{,}602 observations, $k=3$, 200 epochs, $p=4$).}
\label{tab:gridfull}
\begin{tabular*}{\linewidth}{@{\extracolsep{\fill}}lccccc@{\hspace{1.2em}}ccccc}
\toprule
& \multicolumn{5}{c}{seven observables} &
  \multicolumn{5}{c}{with time and label columns} \\
\cmidrule(lr){2-6}\cmidrule(lr){7-11}
Heads & 2 & 4 & 8 & 16 & 32 & 2 & 4 & 8 & 16 & 32 \\
\midrule
 1 & \textbf{0.869} & 0.828 & 0.820 & 0.755 & 0.725 & 0.857 & 0.803 & 0.811 & 0.746 & 0.687 \\
 2 & 0.829 & 0.806 & 0.773 & 0.748 & 0.764 & 0.842 & 0.783 & 0.804 & 0.753 & 0.730 \\
 4 & 0.866 & 0.814 & 0.792 & 0.752 & 0.728 & 0.853 & 0.786 & 0.790 & 0.721 & 0.714 \\
 8 & 0.840 & 0.821 & 0.809 & 0.756 & 0.772 & 0.850 & 0.762 & 0.796 & 0.750 & 0.689 \\
16 & 0.832 & 0.834 & 0.779 & 0.775 & 0.740 & 0.846 & 0.818 & 0.799 & 0.746 & 0.703 \\
32 & 0.853 & 0.823 & 0.797 & 0.780 & 0.760 & 0.839 & 0.797 & 0.800 & 0.747 & 0.714 \\
\bottomrule
\end{tabular*}
\end{table}

\end{document}